\newtheorem{prob-statement}{Problem}
\newtheorem{lemma}{Lemma}
\newtheorem{thrm}{Theorem}
\newtheorem{prop}{Proposition}
\DeclareMathOperator*{\argmax}{arg\,max}
\DeclareMathOperator*{\argmin}{arg\,min}
\DeclareMathOperator*{\tr}{Tr}
\DeclareMathOperator*{\sign}{sign}
\DeclareMathAlphabet{\mathpzc}{OT1}{pzc}{m}{it}
\begin{document}

\title{On Strategic Multi-Antenna Jamming in 
\\Centralized Detection Networks}
%

\author{
\IEEEauthorblockN{V. Sriram Siddhardh Nadendla, \IEEEmembership{Student~Member,~IEEE}, Vinod Sharma, \IEEEmembership{Senior~Member,~IEEE}
\\
and Pramod K. Varshney, \IEEEmembership{Fellow,~IEEE}
}

\thanks{V. Sriram Siddhardh Nadendla and Pramod K. Varshney are with the Department
of Electrical Engineering and Computer Science, Syracuse University, Syracuse, NY 13201, USA. E-mail: \{vnadendl, varshney\}@syr.edu.}
\thanks{Vinod Sharma is with the Department
of Electrical Communication Engineering, Indian Institute of Science, Bangalore 560012, India. E-mail: vinod@ece.iisc.ernet.in.}
}


\maketitle

\begin{abstract}
In this paper, we model a complete-information zero-sum game between a centralized detection network with a multiple access channel (MAC) between the sensors and the fusion center (FC), and a jammer with multiple transmitting antennas. We choose error probability at the FC as the performance metric, and investigate pure strategy equilibria for this game, and show that the jammer has no impact on the FC's error probability by employing pure strategies at the Nash equilibrium. Furthermore, we also show that the jammer has an impact on the expected utility if it employs mixed strategies.
\end{abstract}

\begin{IEEEkeywords}
Detection Networks, Multiple Access Channels, Jamming, Saddle-Point Equilibrium.
\end{IEEEkeywords}

\IEEEpeerreviewmaketitle

\section{Introduction and System Model}
Jamming attacks in detection networks have a significant impact on today's world due to the wide range of applications of these networks \cite{Veeravalli2011}. Therefore, several attempts have been made in the past literature to address jamming attacks in detection networks. For more details, please refer to \cite{Perrig2002,Perrig2004,Mpitziopoulos2009} and citations within. In our past work, we have addressed jamming attacks in the context of detection networks with multiple access channels (MACs) \cite{Nadendla2010,Nadendla2011}. In particular, we found equilibrium strategies numerically for a zero-sum game between a centralized detection network and a simple Gaussian jammer with an average power constraint and a single antenna per channel in \cite{Nadendla2011}. In this paper, we extend our work in \cite{Nadendla2011} by investigating pure strategy equilibria in closed form, for a complete-information zero-sum game between a \emph{centralized} detection network and a powerful jammer equipped with multiple antennas and strict (instantaneous) power constraints.

Consider a \emph{centralized} detection network where $N$ sensing agents share raw observations with the fusion center (FC) which makes a global decision regarding the presence/absence of the phenomenon-of-interest (PoI) in the presence of a disruptive jammer, as shown in Figure \ref{Fig: model}. Let $H_1$ denote the hypothesis when PoI is present, and $H_0$ otherwise, with prior probabilities $\pi_1$ and $\pi_0$ respectively. We model the PoI's signal as $\theta = 1$ under $H_1$, and $\theta = 0$ otherwise. In this paper, we refer to the channel between the PoI and any given sensor as a \emph{sensing channel}, and the channel between the sensors and the FC as a \emph{communication channel}. We assume a multiple access channel (MAC) at the communication channel, where all the sensors' messages are superimposed into one received signal at the FC. 

The disruptive jammer interferes with both the sensing and the communication channels by introducing the jamming symbols $\boldsymbol{w}_s$ and $\boldsymbol{w}_{fc}$ respectively. For the sake of notational convenience, we stack these jamming symbols together into a super-symbol $\boldsymbol{w} = \{\boldsymbol{w}_s, \boldsymbol{w}_{fc}\}$. We assume that the jammer has a total power budget $P$, and denote the set of all possible jammer's strategies as $\mathcal{W} \triangleq \{ \boldsymbol{w} \in \mathbb{R}^{L+M} \ | \ ||\boldsymbol{w} ||_2^2 \leq P \}$.

If $\alpha_i$ and $\beta_{il}$ denote the known channel-gains at the $i^{th}$ sensing channel due to the PoI signal and the $l^{th}$ antenna at the jammer respectively, the $i^{th}$ sensor acquires an observation
\begin{equation}
	s_i = \displaystyle \alpha_i \theta + \sum_{l = 1}^L \beta_{il} w_{s_l} + n_i,
	\label{Eqn: Sensor-Observation}
\end{equation}
where $n_i$ is a zero-mean AWGN noise with variance $\sigma_s^2$. 

\begin{figure}[!t]
	\centering
    \includegraphics[width=3.3in]{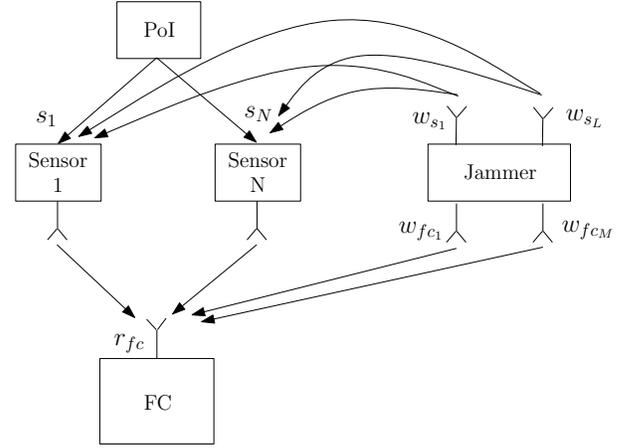}
    \caption{Detection Network in the Presence of a Jammer}
    \label{Fig: model}
\end{figure}

We assume that the $i^{th}$ sensor transmits its raw observation $s_i$ over the MAC. The FC receives the combined signal
\begin{equation}
	\begin{array}{lcl}
    	r_{fc} & = & \displaystyle \sum_{i = 1}^{N} \phi_i s_i + \sum_{m = 1}^{M} \psi_m w_{fc_m} + n_{fc}
    	\\[3ex]
    	& = & \displaystyle a \theta + \boldsymbol{b}^T \boldsymbol{w} + z,
    \end{array}
\end{equation}
\begin{equation*}
	\begin{array}{l}
		\mbox{where} \quad a = \displaystyle \sum_{i = 1}^{N} \phi_i \alpha_i, \quad z = \displaystyle \sum_{i = 1}^{N} \phi_i n_i + n_{fc}, \quad \mbox{and}
		\\[3ex]
		\boldsymbol{b}^T = \displaystyle \left[ 
			\begin{array}{cccccc}
				\displaystyle \sum_{i = 1}^{N} \phi_i \beta_{i1}
				& 
				\cdots
				& 
				\displaystyle \sum_{i = 1}^{N} \phi_i \beta_{iL}
				& 
				\psi_1
				& 
				\cdots
				& 
				\psi_M
			\end{array}
		\right].
	\end{array}
	\label{Eqn: a,b,z}
\end{equation*}

Since $r_{fc}$ is a superposition of the PoI's signal with several Gaussian random variables, $r_{fc} | H_0 \sim \mathcal{N}(\boldsymbol{b}^T\boldsymbol{w}, \sigma^2)$ and $\boldsymbol{r}_{fc} | H_1 \sim \mathcal{N}(a + \boldsymbol{b}^T\boldsymbol{w}, \sigma^2)$, where $\sigma^2 = \displaystyle \sigma_{fc}^2 + \sigma_s^2 \sum_{i = 1}^{N} \phi_i^2$ is the variance of the noise signal $z$. 

We assume that the FC employs a decision rule\footnote{Since this is a likelihood ratio test, all the other rules are dominated. Therefore, their removal does not result any loss in network performance.}
\begin{equation}
	r_{fc} \  \mathop{\stackrel{H_1}{\gtrless}}_{H_0} \ \lambda,
	\label{Eqn: FC-Test-Statistic}
\end{equation}
where $\lambda \in \Lambda$\footnote{Although $\lambda$ can be any real number in practice, for the sake of tractability, we assume that $\Lambda \triangleq [-R, R]$, where $R$ is a sufficiently large real number. For more details, the reader may refer to Theorem 5, Page 168 in \cite{Book-Basar} which guarantees the existence of a mixed strategy equilibrium.} is a real-valued threshold designed to minimize the FC's error probability 
\begin{equation}
	P_E = \displaystyle \pi_0 Q \left( \frac{\lambda - \boldsymbol{b}^T \boldsymbol{w}}{\sigma} \right) + \pi_1 \left[ 1 - Q \left( \frac{\lambda - \boldsymbol{b}^T \boldsymbol{w} - a}{\sigma} \right) \right],
	\label{Eqn: Pe-FC-centralized-K=1}
\end{equation}
while the jammer simultaneously attempts to maximize $P_E$ by employing an appropriate jamming signal $\boldsymbol{w}$. 


\section{Evaluation of Pure Strategy Equilibria}
We model the interaction between the FC and the jammer formally as a zero-sum game as stated below.
\begin{prob-statement}
Find the Nash equilibria $\{ \lambda^*, \boldsymbol{w}^* \} \in \Lambda \times \mathcal{W}$ that satisfy the following inequality:
\begin{align*}
	\begin{array}{c}
		\\[-1.5ex]
		\displaystyle P_E(\lambda^*,\boldsymbol{w}) \leq P_E(\lambda^*,\boldsymbol{w}^*) \leq P_E(\lambda,\boldsymbol{w}^*)
		\\[2ex]
		\forall \ \lambda \in \Lambda, \ \boldsymbol{w} \in \mathcal{W}.
	\end{array}
\end{align*}
\label{Prob.Stmt: K=1}
\end{prob-statement}

First, we investigate some important properties of $P_E$ that guarantee the existence of pure-strategy equilibria.

\begin{lemma}
For a given $\boldsymbol{b}$, $\boldsymbol{w}$ and $\sigma$, $P_E$ is a quasiconvex function of $\lambda$.
\label{Lemma: Quasiconvex P_E(lambda) when K=1} 
\end{lemma}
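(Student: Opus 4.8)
The plan is to show that $P_E$, regarded as a function of $\lambda$ over $\mathbb{R}$ (and hence over $\Lambda = [-R,R]$), is \emph{unimodal}: strictly decreasing up to a single stationary point and strictly increasing thereafter. Unimodality suffices, because every sublevel set of a unimodal function is an interval and therefore convex, which is exactly the definition of quasiconvexity. The degenerate case $a = 0$ makes the two hypotheses induce the same mean at the FC, so $P_E$ is constant in $\lambda$ and trivially quasiconvex; I would dispose of it first and then assume $a \ne 0$.

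First I would differentiate (\ref{Eqn: Pe-FC-centralized-K=1}) with respect to $\lambda$. Writing $\phi$ for the standard normal density, so that $Q'(x) = -\phi(x)$, the chain rule gives
\begin{equation*}
	\frac{d P_E}{d\lambda} = \frac{1}{\sigma}\left[ \pi_1\, \phi\!\left(\frac{\lambda - \boldsymbol{b}^T\boldsymbol{w} - a}{\sigma}\right) - \pi_0\, \phi\!\left(\frac{\lambda - \boldsymbol{b}^T\boldsymbol{w}}{\sigma}\right) \right].
\end{equation*}
Thus the sign of $P_E'$ is governed entirely by the bracketed term, and stationarity is precisely the likelihood-ratio balance $\pi_1 \phi(\cdot) = \pi_0 \phi(\cdot)$. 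The crux is to argue that this bracket changes sign \emph{exactly once}. I would do so by forming the ratio of the two Gaussian densities and completing the square in the exponent,
\begin{equation*}
	\frac{\phi\!\left(\frac{\lambda - \boldsymbol{b}^T\boldsymbol{w} - a}{\sigma}\right)}{\phi\!\left(\frac{\lambda - \boldsymbol{b}^T\boldsymbol{w}}{\sigma}\right)} = \exp\!\left( \frac{a\,(2\lambda - 2\boldsymbol{b}^T\boldsymbol{w} - a)}{2\sigma^2} \right),
\end{equation*}
which is strictly monotone in $\lambda$ whenever $a \ne 0$ --- the monotone-likelihood-ratio property of the Gaussian location family.

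With the test in (\ref{Eqn: FC-Test-Statistic}) declaring $H_1$ for large $r_{fc}$, we have $a > 0$, so this ratio is strictly increasing; the bracket is therefore negative for $\lambda$ below the unique root
\begin{equation*}
	\lambda^* = \boldsymbol{b}^T\boldsymbol{w} + \frac{a}{2} + \frac{\sigma^2}{a}\ln\frac{\pi_0}{\pi_1}
\end{equation*}
and positive above it. Consequently $P_E$ decreases then increases, so for every $c$ the set $\{\lambda \in \Lambda : P_E(\lambda) \le c\}$ is an interval, which establishes quasiconvexity. The only point demanding genuine care is this single-sign-change step; everything else is bookkeeping. The sign of $a$ is the one assumption that must be made explicit: it is fixed to be positive by the direction of the likelihood-ratio test in (\ref{Eqn: FC-Test-Statistic}), and the symmetric case $a < 0$ (where one would instead decide $H_1$ for small $r_{fc}$) follows from the identical argument applied to the reflected threshold.
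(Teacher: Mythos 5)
Your proof is correct and takes essentially the same route as the paper: the paper likewise differentiates $P_E$ in $\lambda$ and factors the derivative as a nonnegative Gaussian density $f_1(\lambda)$ times the bracket $\pi_1 f_2(\lambda) - \pi_0$, where $f_2$ is precisely your completed-square density ratio $\exp\bigl(\bigl(2a(\lambda - \boldsymbol{b}^T\boldsymbol{w}) - a^2\bigr)/(2\sigma^2)\bigr)$, and then defers the single-sign-change/unimodality conclusion to Lemma 1 of the cited Zhang reference, which is exactly the step you carry out explicitly. Your extra handling of the degenerate cases $a = 0$ and $a < 0$, and your closed form for the unique stationary point (matching the paper's $\lambda^* = \boldsymbol{b}^T\boldsymbol{w} + c$ in Proposition \ref{Prop: Optimal-lambda-K=1}), simply make self-contained what the paper omits for brevity.
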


\begin{proof}
For a fixed $\boldsymbol{b}$, $\boldsymbol{w}$ and $\sigma$, we differentiate $P_E$ with respect to $\lambda$ and obtain
\begin{equation}
	\begin{array}{lcl}
		\displaystyle \frac{\partial P_E}{\partial \lambda} & = & \displaystyle 
		f_1(\lambda) \cdot \left[ \pi_1 f_2(\lambda) - \pi_0 \right]
	\end{array}
	\label{Eqn: diff-Pe-lambda}
\end{equation}
where
\begin{subequations}
\begin{equation}
	\label{Eqn: f1_centralized}
	f_1(\lambda) = \displaystyle \frac{1}{\sigma\sqrt{2\pi}} \exp{\left(-\frac{(\lambda - \boldsymbol{b}^T \boldsymbol{w})^2}{2\sigma^2}\right)},
\end{equation}
\begin{equation}
	\label{Eqn: f2_centralized}
	f_2(\lambda) = \exp{\left( \frac{\displaystyle 2 a (\lambda - \boldsymbol{b}^T \boldsymbol{w}) - a^2}{2 \sigma^2} \right)}.
\end{equation}
\end{subequations}

Since this structure has similar properties as that in Lemma 1 in \cite{Zhang2002}, we omit the remaining proof for brevity.
%
\end{proof}

Any channel model with non-negative channel gains ensures that every element in the vector $\boldsymbol{b}$ is non-negative. Since many practical channel models such as path-loss model and Rayleigh fading model have non-negative channel gains, we assume that $\boldsymbol{b}$ is a non-negative vector in the rest of this section. 

\begin{lemma}
For a given $\lambda$, $\boldsymbol{b}$ and $\sigma$, $P_E$ is jointly quasiconcave in $\boldsymbol{w}$, if every entry in $\boldsymbol{b}$ is non-negative.
\label{Lemma: Quasiconcave P_E(w) when K=1} 
\end{lemma}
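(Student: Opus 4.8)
The plan is to collapse the $(L+M)$-dimensional problem to a scalar one. For fixed $\lambda$, $\boldsymbol{b}$ and $\sigma$, the error probability in \eqref{Eqn: Pe-FC-centralized-K=1} depends on the jamming vector $\boldsymbol{w}$ only through the single linear form $u \triangleq \boldsymbol{b}^T \boldsymbol{w}$. Writing $G(u) = \pi_0 Q\!\left( \frac{\lambda - u}{\sigma} \right) + \pi_1 \left[ 1 - Q\!\left( \frac{\lambda - u - a}{\sigma} \right) \right]$, we have $P_E(\boldsymbol{w}) = G(\boldsymbol{b}^T \boldsymbol{w})$. Since quasiconcavity is invariant under composition with an affine map and $\boldsymbol{w} \mapsto \boldsymbol{b}^T \boldsymbol{w}$ is linear, it suffices to prove that the scalar profile $G$ is quasiconcave; joint quasiconcavity in $\boldsymbol{w}$ then follows automatically. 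I would establish the scalar claim by showing that $G$ is unimodal, i.e. that it increases up to a single point and decreases thereafter, so that every superlevel set $\{ u : G(u) \geq c \}$ is an interval.

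Differentiating and factoring out the (strictly positive) Gaussian density yields
\begin{equation}
	G'(u) = \frac{\pi_0}{\sigma}\, \phi\!\left( \frac{\lambda - u}{\sigma} \right) \left[ 1 - \frac{\pi_1}{\pi_0}\, \rho(u) \right],
	\label{Eqn: Gprime-proposal}
\end{equation}
where $\phi$ is the standard normal density and $\rho(u) = \phi\!\left( \frac{\lambda - u - a}{\sigma} \right) \big/ \phi\!\left( \frac{\lambda - u}{\sigma} \right)$ is a likelihood-ratio-type factor that is strictly monotone in $u$, exactly the structure invoked for Lemma \ref{Lemma: Quasiconvex P_E(lambda) when K=1} and in Lemma 1 of \cite{Zhang2002}. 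Because the leading factor is positive, the sign of $G'$ is dictated entirely by the bracketed term. The central step is a single-crossing argument: the monotone ratio $\rho(u)$ meets the level $\pi_0/\pi_1$ exactly once, so the bracket changes sign at a unique point $u^{\star}$, giving $G$ a single stationary point. I would then show this stationary point is a maximum, so that $G$ rises then falls, which is precisely unimodality and hence quasiconcavity of $G$.

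Finally, I would invoke the hypothesis $\boldsymbol{b} \geq \boldsymbol{0}$ to render the scalar conclusion genuinely joint: the gradient $\nabla_{\boldsymbol{w}} P_E = G'(\boldsymbol{b}^T \boldsymbol{w})\,\boldsymbol{b}$ is a sign-definite multiple of a non-negative vector, so all partial derivatives share the common sign of $G'$, and each superlevel set of $P_E$ is the preimage $\{ \boldsymbol{w} : u_1 \leq \boldsymbol{b}^T \boldsymbol{w} \leq u_2 \}$ of an interval under the non-negative linear form $\boldsymbol{b}^T \boldsymbol{w}$, which is convex. The main obstacle is the orientation of the single crossing in \eqref{Eqn: Gprime-proposal}: one must verify that $\rho(u)$ crosses $\pi_0/\pi_1$ in the direction that makes $u^{\star}$ a maximizer of $G$, since the identical algebra with the reversed orientation would instead deliver quasiconvexity. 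Pinning down this direction, together with the boundary behaviour of $G$, is where the real work of the proof lies.
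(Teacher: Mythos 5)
Your scalar reduction is correct, and it is in fact tidier than the paper's gradient-based argument: pre-composition with the linear map $\boldsymbol{w}\mapsto\boldsymbol{b}^T\boldsymbol{w}$ preserves quasiconcavity regardless of the sign of $\boldsymbol{b}$, so (as you half-notice) the hypothesis $\boldsymbol{b}\geq\boldsymbol{0}$ never really enters. Your derivative factorization is also right. But the obstacle you defer to the end --- the orientation of the single crossing --- is not a loose end to be pinned down; it is where the claim collapses. For $a>0$ (the relevant case with non-negative channel gains), $\rho(u)=f_4(u)=\exp\bigl(\bigl(2a(\lambda-u)-a^2\bigr)/(2\sigma^2)\bigr)$ is strictly \emph{decreasing}, so the bracket $1-(\pi_1/\pi_0)\rho(u)$ is strictly increasing and $G'$ changes sign from negative to positive at $u^{\star}=\lambda-c$: the unique stationary point is the global \emph{minimum}, $G$ falls and then rises, and $G$ is quasiconvex, not quasiconcave. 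A quick sanity check: $P_E$ in \eqref{Eqn: Pe-FC-centralized-K=1} depends on $(\lambda,\boldsymbol{w})$ only through $t=\lambda-\boldsymbol{b}^T\boldsymbol{w}$, say $P_E=F(t)$, and Lemma \ref{Lemma: Quasiconvex P_E(lambda) when K=1} already established that $F$ is quasiconvex in $\lambda$; hence $u\mapsto F(\lambda-u)$ is quasiconvex in $u$ as well. Numerically, take $\pi_0=\pi_1=1/2$, $a=\sigma=1$, $\lambda=1/2$ (so $c=1/2$, $u^{\star}=0$): then $G(0)=Q(1/2)\approx 0.309$ while $G(\pm 1)\approx 0.379$, so the midpoint dips strictly below both endpoints and the superlevel set $\{u : G(u)\geq 0.35\}$ is a union of two rays, not an interval. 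Intuitively, the jamming term shifts both conditional means by the same amount, leaving their separation $a$ intact and merely misaligning the fixed threshold; the error is smallest at perfect alignment $u=\lambda-c$ and grows with $|u-u^{\star}|$.

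So your plan to ``show this stationary point is a maximum'' would fail, and completing your argument honestly yields quasiconvexity of $P_E$ in $\boldsymbol{w}$ --- a refutation of the lemma as stated rather than a proof. You should know that the paper's own proof founders on exactly the point you flagged: it works with $g=-G'$ and correctly records $g(y)\geq 0$ for $y\leq y_0$ and $g(y)<0$ for $y>y_0$, but its CASE-3 then asserts the opposite signs ($g(y_1)$ negative when $y_1\leq y_0\leq y_2$, positive when $y_2\leq y_0\leq y_1$) in order to conclude $g(y_1)\,(y_1-y_2)\geq 0$. With the signs applied consistently, CASE-3 gives $g(y_1)\,(y_1-y_2)\leq 0$, while the premise \eqref{Eqn: Necessary-Condition-Quasiconcave-Pe-K=1} can certainly hold with $y_1,y_2$ on opposite sides of $y_0$ (e.g., $y_1=-1/2$, $y_2=1$ in the numerical example above), so the implication \eqref{Eqn: Prove-Quasiconcave-Pe-K=1} is false. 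In short: your reduction and factorization are sound, the gap you identified is genuine, and it cannot be closed --- not by invoking $\boldsymbol{b}\geq\boldsymbol{0}$ (which plays no role) nor by any re-orientation, since $a>0$ fixes the crossing direction.
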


\begin{proof}
Given any two points $\boldsymbol{w}_1, \boldsymbol{w}_2 \in \mathcal{W}$, $P_E$ is jointly quasiconcave \cite{Book-Boyd2004} if and only if
\begin{equation}
	\displaystyle P_E(\boldsymbol{w}_1) \leq P_E(\boldsymbol{w}_2) \quad \Rightarrow \quad \nabla_{\boldsymbol{w}} P_E(\boldsymbol{w}_1) \cdot (\boldsymbol{w}_1 - \boldsymbol{w}_2) \leq 0.
	\label{Eqn: Def-Quasiconcavity}
\end{equation}

In our framework, the necessary condition $\displaystyle P_E(\boldsymbol{w}_1) \leq P_E(\boldsymbol{w}_2)$ reduces to
\begin{equation}
	\displaystyle \int_{y_2}^{y_1} g(y) dy \geq 0.
	\label{Eqn: Necessary-Condition-Quasiconcave-Pe-K=1}
\end{equation}
Here, $y_1 = \boldsymbol{b}^T \boldsymbol{w}_1$ and $y_2 = \boldsymbol{b}^T \boldsymbol{w}_2$ are the integral limits 
and
\begin{equation}
	g(y) = f_3(y) \cdot \left[ \pi_1 f_4(y) - \pi_0 \right],
\end{equation}
where
\begin{subequations}
\begin{equation}
	\label{Eqn: f3_centralized}
	f_3(y) = \displaystyle \frac{1}{\sigma\sqrt{2\pi}} \exp{\left(-\frac{(\lambda - y)^2}{2\sigma^2}\right)},
\end{equation}
\begin{equation}
	\label{Eqn: f4_centralized}
	f_4(y) = \exp{\left( \frac{\displaystyle 2 a (\lambda - y) - a^2}{2 \sigma^2} \right)}.
\end{equation}
\end{subequations}
%

Given that the values of $\boldsymbol{b}$, $\lambda$ and $\sigma$ are fixed, we differentiate $P_E$ with respect to $\boldsymbol{w}$ and obtain
\begin{equation}
	\begin{array}{lcl}
		\displaystyle \nabla_{\boldsymbol{w}} P_E(\boldsymbol{w}_1) & = & \displaystyle 
		-\boldsymbol{b} \cdot g(y_1).
	\end{array}
	\label{Eqn: diff-Pe-w-K=1}
\end{equation}

Substituting Equations \eqref{Eqn: Necessary-Condition-Quasiconcave-Pe-K=1} and \eqref{Eqn: diff-Pe-w-K=1} in Equation \eqref{Eqn: Def-Quasiconcavity}, we need to show that
\begin{equation}
	\displaystyle \displaystyle \int_{y_2}^{y_1} g(y) dy \geq 0 \quad \Rightarrow \quad g(y_1) \cdot \left[ y_1 - y_2 \right] \ \geq \ 0
	\label{Eqn: Prove-Quasiconcave-Pe-K=1}
\end{equation}
in order to prove the lemma.

%

%

Note that $f_3(y) \geq 0$. Since $f_4(y)$ is a monotonically decreasing function of $y$, we have $g(y) \geq 0$ whenever $y \leq y_0$, and $g(y) < 0$ whenever $y > y_0$, where $y_0$ is the unique zero-crossing point at which $f_4(y_0) = \displaystyle \frac{\pi_0}{\pi_1}$. Using this property, we prove the theorem in three cases, as shown below. 

\paragraph*{CASE-1 [$y_0 \leq y_1, y_2$]} Given that $y_0 \leq y_1, y_2$, we have $g(y) \leq 0$ for any $y$ between $y_1$ and $y_2$. In such a case, the necessary condition given in Equation \eqref{Eqn: Necessary-Condition-Quasiconcave-Pe-K=1} holds true when $y_1 \leq y_2$. In other words, $g(y_1) \cdot [y_1 - y_2] \geq 0$ whenever Equation \eqref{Eqn: Necessary-Condition-Quasiconcave-Pe-K=1} holds true in this case. 

\paragraph*{CASE-2 [$y_1, y_2 \leq y_0$]} Given that $y_1, y_2 \leq y_0$, we have $g(y) \geq 0$ for any $y$ between $y_1$ and $y_2$. Therefore, the necessary condition in Equation \eqref{Eqn: Necessary-Condition-Quasiconcave-Pe-K=1} holds true when $y_2 \leq y_1$. As a result, $g(y_1) \cdot [y_1 - y_2] \geq 0$ whenever Equation \eqref{Eqn: Necessary-Condition-Quasiconcave-Pe-K=1} holds true in this case.

\paragraph*{CASE-3 [$y_1 \leq y_0 \leq y_2$ or $y_2 \leq y_0 \leq y_1$]} Note that this is a trivial case. This is because of the following. If $y_1 \leq y_0 \leq y_2$, both $g(y_1)$ and $(y_1 - y_2)$ are negative. On the other hand, if $y_2 \leq y_0 \leq y_1$, both $g(y_1)$ and $(y_1 - y_2)$ are positive. Either way, their product $g(y_1) \cdot [y_1 - y_2] \geq 0$ whether or not, the necessary condition in Equation \eqref{Eqn: Necessary-Condition-Quasiconcave-Pe-K=1} holds true.
\end{proof}

Given that $P_E$ is quasi-concave-convex in nature, a pure strategy solution exists due to the classic Debreu-Glicksberg-Fan existence theorem \cite{Book-Fudenberg, Book-Basar}. Therefore, we start by investigating the best response strategies at the network in the following proposition.

\begin{prop}
The optimal threshold $\lambda^* = \displaystyle \argmin_{\lambda} P_E(\lambda, \boldsymbol{w})$ for a fixed jammer's strategy $\boldsymbol{w}$ is given by
\begin{equation}
	\lambda^* = \boldsymbol{b}^T \boldsymbol{w} + c
	\label{Eqn: optimal-lambda-K=1}
\end{equation}
where $c = \displaystyle \frac{1}{2a} \left[a^2 + 2 \sigma^2 \log \left( \frac{\pi_0}{\pi_1} \right) \right]$ is a constant. Furthermore, $P_E(\lambda = \lambda^*,\boldsymbol{w})$ is independent of $\boldsymbol{w}$.
\label{Prop: Optimal-lambda-K=1}
\end{prop}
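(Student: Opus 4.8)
The plan is to leverage the quasiconvexity of $P_E$ in $\lambda$ established in Lemma~\ref{Lemma: Quasiconvex P_E(lambda) when K=1}, which guarantees that any interior stationary point of $P_E(\cdot,\boldsymbol{w})$ is automatically its global minimizer. So the first step is simply to locate that stationary point by setting the derivative in Equation~\eqref{Eqn: diff-Pe-lambda} to zero. Since $f_1(\lambda)$ in Equation~\eqref{Eqn: f1_centralized} is a strictly positive Gaussian density, the product can only vanish through the bracketed factor, and the first-order condition collapses to $\pi_1 f_2(\lambda^*) = \pi_0$, i.e.\ $f_2(\lambda^*) = \pi_0/\pi_1$.

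The second step is to solve this scalar equation in closed form. Substituting the explicit expression for $f_2$ from Equation~\eqref{Eqn: f2_centralized} and taking logarithms converts the exponential equation into a linear one in $\lambda^*$; isolating $\lambda^*$ then gives $\lambda^* - \boldsymbol{b}^T\boldsymbol{w} = \frac{1}{2a}\left[a^2 + 2\sigma^2 \log(\pi_0/\pi_1)\right] = c$, which is exactly Equation~\eqref{Eqn: optimal-lambda-K=1}. Before accepting this as the minimizer I would confirm the stationary point is a minimum and not a maximum: because $f_2$ is monotonically increasing in $\lambda$ (under the standing assumption $a>0$), the factor $\pi_1 f_2(\lambda)-\pi_0$ switches sign from negative to positive at $\lambda^*$, so $P_E$ decreases and then increases, fully consistent with Lemma~\ref{Lemma: Quasiconvex P_E(lambda) when K=1}. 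The footnote fixing $\Lambda=[-R,R]$ with $R$ large then lets me treat $\lambda^*$ as an interior point, so the unconstrained minimizer is admissible.

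For the second assertion, the key observation is that at optimality the quantity $\lambda^*-\boldsymbol{b}^T\boldsymbol{w}$ equals the constant $c$, so the dependence on $\boldsymbol{w}$ cancels entirely. Substituting $\lambda=\lambda^*$ into Equation~\eqref{Eqn: Pe-FC-centralized-K=1} replaces every occurrence of $\lambda-\boldsymbol{b}^T\boldsymbol{w}$ by $c$, yielding $P_E(\lambda^*,\boldsymbol{w}) = \pi_0 Q(c/\sigma) + \pi_1\left[1 - Q((c-a)/\sigma)\right]$, whose arguments involve only $c$, $a$, and $\sigma$ — all independent of $\boldsymbol{w}$. This finishes the proof. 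The computation itself is routine; the only step that genuinely needs care is justifying that the interior critical point is the global minimum (invoking quasiconvexity together with the monotonicity/sign-change argument for $f_2$) rather than merely a stationary point, and keeping the assumption $a>0$ in force so that the logarithm and the division by $2a$ are well-defined.
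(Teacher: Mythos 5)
Your proposal is correct and follows essentially the same route as the paper: set $\partial P_E/\partial \lambda = 0$, use the positivity of $f_1$ to reduce the first-order condition to $f_2(\lambda^*) = \pi_0/\pi_1$, solve for $\lambda^* = \boldsymbol{b}^T\boldsymbol{w} + c$, and substitute back into $P_E$ so that every occurrence of $\lambda - \boldsymbol{b}^T\boldsymbol{w}$ collapses to the constant $c$. Your additional checks --- invoking the quasiconvexity of Lemma~\ref{Lemma: Quasiconvex P_E(lambda) when K=1} and the sign change of $\pi_1 f_2(\lambda) - \pi_0$ (valid since $f_2$ is increasing for $a>0$) to confirm the stationary point is the global minimum, and noting interiority in $\Lambda = [-R,R]$ --- are sound refinements of details the paper leaves implicit.
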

\begin{proof}
We first consider the inner optimization in the max-min problem where we minimize $P_E$ with respect to $\lambda$ for a fixed jammer's strategy $\boldsymbol{w}$. The optimal $\lambda = \lambda^*$ satisfies
\begin{equation}
	\begin{array}{lcl}
		\displaystyle \frac{\partial P_E}{\partial \lambda} & = & f_1(\lambda) \cdot \left[ \pi_1 f_2(\lambda) - \pi_0 \right] = 0,
	\end{array}
	\label{Eqn: diff-Pe-lambda}
\end{equation}
where $f_1(\lambda) \geq 0$. Thus, if $\displaystyle f_2(\lambda) = \frac{\pi_0}{\pi_1}$, we have $\displaystyle \frac{\partial P_E}{\partial \lambda} = 0$. Substituting Equation \eqref{Eqn: f2_centralized} and rearranging terms, we have
\begin{equation}
	\lambda^* = \boldsymbol{b}^T \boldsymbol{w} + c
	\label{Eqn: optimal-lambda-K=1}
\end{equation}
where $c = \displaystyle \frac{1}{2a} \left[a^2 + 2 \sigma^2 \log \left( \frac{\pi_0}{\pi_1} \right) \right]$ is independent of $\boldsymbol{w}$.

Given a fixed jammer's strategy $\boldsymbol{w}$, if the FC employs the optimal threshold $\lambda^*$, from Equation \eqref{Eqn: optimal-lambda-K=1}, the error probability at the FC is given by
\begin{equation}
	\begin{array}{lcl}
		P_E(\lambda^*,\boldsymbol{w}) & = & \displaystyle \pi_0 Q \left( \frac{c}{\sigma} \right) + \pi_1 \left[ 1 - Q \left( \frac{c - a}{\sigma} \right) \right].
	\end{array}
	\label{Eqn: Pe-FC-centralized-K=1}
\end{equation}
Note that $P_E(\lambda^*,\boldsymbol{w})$ is independent of the jammer's strategy $\boldsymbol{w}$, as stated in the proposition statement. 
\end{proof}

Note that the best response strategy employed by the network, as shown in Equation \eqref{Eqn: optimal-lambda-K=1}, is unique for a fixed jammer's strategy $\boldsymbol{w}$. Furthermore, the jammer's signal introduces a linear shift to the point $\lambda = c$, which is optimal in the absence of the jammer. In contrast, when we investigate the optimal jammer's strategy $\boldsymbol{w}^*$ by considering the min-max framework, we have the following proposition.
\begin{prop}
The optimal jammer's strategy $\boldsymbol{w}^* = \displaystyle \argmax_{\boldsymbol{w}} P_E(\lambda, \boldsymbol{w})$ for a fixed threshold $\lambda$ satisfies
\begin{equation}
	\boldsymbol{b}^T \boldsymbol{w}^* = \lambda - c.
	\label{Eqn: optimal-w-K=1}
\end{equation}
where $c = \displaystyle \frac{1}{2a} \left[a^2 + 2 \sigma^2 \log \left( \frac{\pi_0}{\pi_1} \right) \right]$. Such a pure-strategy solution exists only when 
\begin{equation}
	c - \sqrt{P \cdot \boldsymbol{b}^T \boldsymbol{b} } \ \leq \ \lambda \ \leq \ c + \sqrt{P \cdot \boldsymbol{b}^T \boldsymbol{b} }.
	\label{Eqn: existence-optimal-w-K=1}
\end{equation}
\label{Prop: Optimal-w-K=1}
\end{prop}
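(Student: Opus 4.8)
The plan is to exploit the fact that $P_E$ depends on $\boldsymbol{w}$ only through the scalar inner product $y = \boldsymbol{b}^T\boldsymbol{w}$, which collapses the constrained vector maximization into a one-dimensional stationarity condition together with a reachability (feasibility) test under the power budget. First I would write the first-order condition for an interior maximizer. From Equation \eqref{Eqn: diff-Pe-w-K=1}, $\nabla_{\boldsymbol{w}} P_E = -\boldsymbol{b}\cdot g(y)$; since $\boldsymbol{b}$ is a non-zero, non-negative vector, this gradient vanishes if and only if $g(y)=0$, and because $f_3(y)>0$ in Equation \eqref{Eqn: f3_centralized}, this is equivalent to $\pi_1 f_4(y) = \pi_0$, i.e. $f_4(y)=\pi_0/\pi_1$. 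Solving this relation exactly as in the proof of Proposition \ref{Prop: Optimal-lambda-K=1} (substituting Equation \eqref{Eqn: f4_centralized} and rearranging) yields $\boldsymbol{b}^T\boldsymbol{w}^* = \lambda - c$ with the stated constant $c$, which is Equation \eqref{Eqn: optimal-w-K=1}.

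Next I would argue that this stationary level is genuinely the maximizer and that it is unique. By Lemma \ref{Lemma: Quasiconcave P_E(w) when K=1}, $P_E$ is jointly quasiconcave in $\boldsymbol{w}$; moreover $f_4$ is strictly monotone, so $g$ possesses a single zero-crossing, namely $y_0 = \lambda - c$. Hence the interior stationary point identified above is the unique global maximizer of $P_E$ over $\mathbb{R}^{L+M}$, and any optimal $\boldsymbol{w}^*$ must satisfy $\boldsymbol{b}^T\boldsymbol{w}^* = \lambda - c$.

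Finally I would establish the existence window by checking when the maximizing level $y=\lambda-c$ is attainable under $\|\boldsymbol{w}\|_2^2\le P$. By the Cauchy--Schwarz inequality, $|\boldsymbol{b}^T\boldsymbol{w}| \le \|\boldsymbol{b}\|_2\,\|\boldsymbol{w}\|_2 \le \sqrt{P\,\boldsymbol{b}^T\boldsymbol{b}}$ for every $\boldsymbol{w}\in\mathcal{W}$, and the aligned choices $\boldsymbol{w}=t\boldsymbol{b}$ with $|t|\le \sqrt{P}/\|\boldsymbol{b}\|_2$ make $\boldsymbol{b}^T\boldsymbol{w}$ sweep the whole interval $[-\sqrt{P\,\boldsymbol{b}^T\boldsymbol{b}},\,\sqrt{P\,\boldsymbol{b}^T\boldsymbol{b}}]$. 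Therefore a feasible maximizer of the required form exists precisely when $|\lambda-c|\le\sqrt{P\,\boldsymbol{b}^T\boldsymbol{b}}$, which is exactly Equation \eqref{Eqn: existence-optimal-w-K=1}. The step I expect to be the main obstacle is this reachability characterization: the level $y=\lambda-c$ describes the \emph{unconstrained} optimizer, whereas the jammer is confined to the power ball, so I must argue that once $|\lambda-c|$ exceeds $\sqrt{P\,\boldsymbol{b}^T\boldsymbol{b}}$ the quasiconcave $P_E$ becomes monotone along the feasible segment $\{\boldsymbol{b}^T\boldsymbol{w}:\boldsymbol{w}\in\mathcal{W}\}$ and the optimizer is driven to the boundary of $\mathcal{W}$, so that no pure-strategy solution of the form \eqref{Eqn: optimal-w-K=1} survives. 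Pinning down the exact threshold band is handled cleanly by the Cauchy--Schwarz equality condition, but it is the part of the argument that requires the most care.
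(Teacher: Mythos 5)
Your proposal is correct and follows essentially the same route as the paper: the paper also obtains $\boldsymbol{b}^T\boldsymbol{w}^* = \lambda - c$ by the stationarity argument of Proposition \ref{Prop: Optimal-lambda-K=1} (your gradient computation via $g$), and its existence condition is exactly your Cauchy--Schwarz step, phrased there as requiring the hyperplane $\boldsymbol{b}^T\boldsymbol{w} = \lambda - c$ to lie within squared distance $P$ of the origin, i.e.\ $(\lambda-c)^2/\boldsymbol{b}^T\boldsymbol{b} \leq P$. Your additional remarks on uniqueness of the optimal level set and on boundary behavior mirror the paper's reliance on Lemma \ref{Lemma: Quasiconcave P_E(w) when K=1} and merely flesh out its terse write-up rather than introducing a different method.
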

\begin{proof}
An approach similar to the proof of Proposition \ref{Prop: Optimal-lambda-K=1} can be followed for finding Equation \eqref{Eqn: optimal-w-K=1}. Therefore, we focus our attention on finding the existence condition, given in Equation \eqref{Eqn: existence-optimal-w-K=1}.

In order for a pure-strategy solution to exist, $\boldsymbol{w}^*$ should lie within the set of strategies that satisfy the jammer's total power budget. In other words, we need $\left( \boldsymbol{w}^* \right)^T \boldsymbol{w}^* \leq P$. Therefore, the affine function given in Equation \eqref{Eqn: optimal-w-K=1} should be within the squared-distance of $P$ units from the origin $\boldsymbol{w} = \boldsymbol{0}$. In other words, we have
\begin{equation}
	\frac{(\lambda - c)^2}{\boldsymbol{b}^T \boldsymbol{b}} \leq P.
	\label{Eqn: existence-optimal-w-K=1-equiv}
\end{equation}
Note that this condition can also be equivalently stated as given in Equation \eqref{Eqn: existence-optimal-w-K=1}.
\end{proof}

Note that the jammer's best response strategy is not unique, as shown in Equation \eqref{Eqn: optimal-w-K=1}. Indeed, there are infinite possibilities since the jammer can adopt any strategy on a line segment without any regret. Combining the results from Propositions \ref{Prop: Optimal-lambda-K=1} and \ref{Prop: Optimal-w-K=1}, we have the following main result of this section.

\begin{thrm}
For every $-\boldsymbol{b} \leq \boldsymbol{\epsilon} \leq \boldsymbol{b}$, 
\begin{equation}
	\begin{array}{lcl}
		\displaystyle \lambda^* = c + \sqrt{\frac{P}{\boldsymbol{b}^T \boldsymbol{b}}} \boldsymbol{b}^T \boldsymbol{\epsilon}, & \ & \displaystyle \boldsymbol{w}^* = \sqrt{\frac{P}{\boldsymbol{b}^T \boldsymbol{b}}} \boldsymbol{\epsilon}
	\end{array}
\end{equation}
is a pure-strategy Nash equilibrium. At the above equilibrium point, the error probability at the FC is given by
\begin{equation}
	\begin{array}{lcl}
		P_E(\lambda^*,\boldsymbol{w}^*) & = & \displaystyle \pi_0 Q \left( \frac{c}{\sigma} \right) + \pi_1 \left[ 1 - Q \left( \frac{c - a}{\sigma} \right) \right].
	\end{array}
	\label{Eqn: Pe-FC-equilibrium-K=1}
\end{equation}
\label{Thrm: NE-K=1}
\end{thrm}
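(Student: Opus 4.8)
The plan is to verify directly that the proposed pair $(\lambda^*,\boldsymbol{w}^*)$ satisfies the saddle-point inequality of Problem~\ref{Prob.Stmt: K=1}, by showing that each coordinate is a best response to the other and invoking Propositions~\ref{Prop: Optimal-lambda-K=1} and~\ref{Prop: Optimal-w-K=1}. The right-hand inequality $P_E(\lambda^*,\boldsymbol{w}^*) \leq P_E(\lambda,\boldsymbol{w}^*)$ asserts that $\lambda^*$ is the FC's best response to $\boldsymbol{w}^*$; by Proposition~\ref{Prop: Optimal-lambda-K=1} this best response is $\boldsymbol{b}^T\boldsymbol{w}^* + c$, so I would simply compute $\boldsymbol{b}^T\boldsymbol{w}^* = \sqrt{P/(\boldsymbol{b}^T\boldsymbol{b})}\,\boldsymbol{b}^T\boldsymbol{\epsilon}$ and confirm that $\boldsymbol{b}^T\boldsymbol{w}^* + c$ equals the claimed $\lambda^*$. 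Symmetrically, the left-hand inequality $P_E(\lambda^*,\boldsymbol{w}) \leq P_E(\lambda^*,\boldsymbol{w}^*)$ asserts that $\boldsymbol{w}^*$ is the jammer's best response to $\lambda^*$; by Proposition~\ref{Prop: Optimal-w-K=1} any such best response satisfies $\boldsymbol{b}^T\boldsymbol{w}^* = \lambda^* - c$, and I would verify this identity for the proposed pair. The global (rather than merely stationary) optimality of these two best responses is underwritten by the quasiconvex--quasiconcave structure established in Lemmas~\ref{Lemma: Quasiconvex P_E(lambda) when K=1} and~\ref{Lemma: Quasiconcave P_E(w) when K=1}, so once the substitutions match, both inequalities follow.

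Before these two checks can be invoked, two feasibility-type conditions must be established, and this is where the hypothesis $-\boldsymbol{b} \leq \boldsymbol{\epsilon} \leq \boldsymbol{b}$ together with the non-negativity of $\boldsymbol{b}$ does the real work. First, I must show $\boldsymbol{w}^* \in \mathcal{W}$, i.e.\ $(\boldsymbol{w}^*)^T\boldsymbol{w}^* = (P/\boldsymbol{b}^T\boldsymbol{b})\,\boldsymbol{\epsilon}^T\boldsymbol{\epsilon} \leq P$; this reduces to $\boldsymbol{\epsilon}^T\boldsymbol{\epsilon} \leq \boldsymbol{b}^T\boldsymbol{b}$, which follows componentwise from $\epsilon_l^2 \leq b_l^2$, a consequence of $|\epsilon_l| \leq b_l$. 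Second, I must check that the jammer's best response actually exists, i.e.\ the condition~\eqref{Eqn: existence-optimal-w-K=1} of Proposition~\ref{Prop: Optimal-w-K=1}; substituting $\lambda^* - c = \sqrt{P/(\boldsymbol{b}^T\boldsymbol{b})}\,\boldsymbol{b}^T\boldsymbol{\epsilon}$ into~\eqref{Eqn: existence-optimal-w-K=1-equiv} reduces it to $(\boldsymbol{b}^T\boldsymbol{\epsilon})^2 \leq (\boldsymbol{b}^T\boldsymbol{b})^2$, which follows from $|\boldsymbol{b}^T\boldsymbol{\epsilon}| \leq \sum_l b_l\,|\epsilon_l| \leq \sum_l b_l^2 = \boldsymbol{b}^T\boldsymbol{b}$. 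Both estimates crucially use that $\boldsymbol{b}$ is non-negative, so that the box $-\boldsymbol{b} \leq \boldsymbol{\epsilon} \leq \boldsymbol{b}$ is exactly the pointwise bound $|\epsilon_l| \leq b_l$.

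With feasibility and existence in hand, the two best-response identities of the first paragraph hold, so the saddle-point inequality is satisfied and $(\lambda^*,\boldsymbol{w}^*)$ is a pure-strategy Nash equilibrium; since the construction is valid for \emph{every} admissible $\boldsymbol{\epsilon}$, this simultaneously exhibits the whole continuum of equilibria claimed in the statement. The equilibrium value~\eqref{Eqn: Pe-FC-equilibrium-K=1} is then immediate: because $\lambda^*$ is the optimal threshold response to $\boldsymbol{w}^*$, Proposition~\ref{Prop: Optimal-lambda-K=1} yields $P_E(\lambda^*,\boldsymbol{w}^*) = \pi_0 Q(c/\sigma) + \pi_1[1 - Q((c-a)/\sigma)]$, and the same proposition already records that this quantity is independent of $\boldsymbol{w}$ --- which is precisely the paper's headline conclusion that the jammer has no impact at a pure-strategy equilibrium.

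I expect the only genuine obstacle to be the feasibility/existence bookkeeping of the second paragraph. Everything else is an algebraic substitution into Propositions~\ref{Prop: Optimal-lambda-K=1} and~\ref{Prop: Optimal-w-K=1}, whereas the two inequalities $\boldsymbol{\epsilon}^T\boldsymbol{\epsilon} \leq \boldsymbol{b}^T\boldsymbol{b}$ and $|\boldsymbol{b}^T\boldsymbol{\epsilon}| \leq \boldsymbol{b}^T\boldsymbol{b}$ are exactly the reason the normalization $\sqrt{P/(\boldsymbol{b}^T\boldsymbol{b})}$ and the box $[-\boldsymbol{b},\boldsymbol{b}]$ were chosen; they must be tied carefully to the non-negativity assumption on $\boldsymbol{b}$, since without it the componentwise bound $|\epsilon_l| \leq b_l$ (and hence both estimates) would fail.
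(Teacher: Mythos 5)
Your proposal is, modulo the paper's own building blocks, the same argument as the paper's: both reduce Theorem~\ref{Thrm: NE-K=1} to the simultaneous best-response conditions of Propositions~\ref{Prop: Optimal-lambda-K=1} and~\ref{Prop: Optimal-w-K=1} and read off the equilibrium value from Equation~\eqref{Eqn: Pe-FC-centralized-K=1}. The execution differs in a way that favors you: the paper first exhibits only the two extreme profiles $\lambda^* = c \pm \sqrt{P\,\boldsymbol{b}^T\boldsymbol{b}}$, $\boldsymbol{w}^* = \pm\sqrt{P/(\boldsymbol{b}^T\boldsymbol{b})}\,\boldsymbol{b}$, and then \emph{asserts} that the linear parameterization in $\boldsymbol{\epsilon}$ gives equilibria ``because $\boldsymbol{w}^*$ always lies on the line $\boldsymbol{b}^T\boldsymbol{w}^* = \lambda - c$,'' whereas you verify the two best-response identities directly for arbitrary $\boldsymbol{\epsilon}$ and supply the feasibility estimates the paper never writes down: $\boldsymbol{\epsilon}^T\boldsymbol{\epsilon} \leq \boldsymbol{b}^T\boldsymbol{b}$ (so $\boldsymbol{w}^* \in \mathcal{W}$) and $|\boldsymbol{b}^T\boldsymbol{\epsilon}| \leq \boldsymbol{b}^T\boldsymbol{b}$ (so $\lambda^*$ satisfies \eqref{Eqn: existence-optimal-w-K=1}), both correctly tied to the non-negativity of $\boldsymbol{b}$. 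Minor remark: once $\boldsymbol{w}^* \in \mathcal{W}$ is shown to lie on the hyperplane $\boldsymbol{b}^T\boldsymbol{w} = \lambda^* - c$, Cauchy--Schwarz gives $|\lambda^* - c| \leq \sqrt{P\,\boldsymbol{b}^T\boldsymbol{b}}$ automatically, so your separate existence check is redundant, though harmless.

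One caution, which you inherit from the paper rather than introduce, concerns your claim that global (not merely stationary) optimality of the jammer's response is ``underwritten by'' Lemma~\ref{Lemma: Quasiconcave P_E(w) when K=1}. Since $P_E$ depends on $\boldsymbol{w}$ only through $y = \boldsymbol{b}^T\boldsymbol{w}$, the paper's own Equation~\eqref{Eqn: diff-Pe-w-K=1} gives $\partial P_E/\partial y = -g(y)$, and with $g(y) \geq 0$ for $y \leq y_0 = \lambda - c$ and $g(y) < 0$ for $y > y_0$, the stationary point $\boldsymbol{b}^T\boldsymbol{w}^* = \lambda - c$ is a \emph{minimum} of $P_E$ along the only direction in which it varies; $P_E(\lambda,\cdot)$ is in fact quasiconvex in $\boldsymbol{w}$, not quasiconcave (the paper's CASE-3, asserting $g(y_1) < 0$ when $y_1 \leq y_0$, contradicts its own sign analysis two lines earlier). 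Concretely, at $\boldsymbol{\epsilon} = \boldsymbol{0}$ your profile is $(\lambda^*, \boldsymbol{w}^*) = (c, \boldsymbol{0})$, which puts the FC at the unjammed Bayes minimum; any feasible $\boldsymbol{w}$ with $\boldsymbol{b}^T\boldsymbol{w} \neq 0$ shifts the effective threshold away from $c$ and strictly increases $P_E$, so the left-hand saddle inequality $P_E(\lambda^*, \boldsymbol{w}) \leq P_E(\lambda^*, \boldsymbol{w}^*)$ fails if one audits Lemma~\ref{Lemma: Quasiconcave P_E(w) when K=1} rather than citing it. The right-hand inequality (the FC side, via Lemma~\ref{Lemma: Quasiconvex P_E(lambda) when K=1}) is sound. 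Since the paper's proof of Theorem~\ref{Thrm: NE-K=1} makes exactly the same leap through Proposition~\ref{Prop: Optimal-w-K=1}, your reconstruction is faithful to --- indeed more complete than --- the paper's argument; but if the standard is the mathematics itself rather than the paper, this is the step at which both proofs break.
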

\begin{proof}
As stated in Proposition \ref{Prop: Optimal-w-K=1}, $\lambda^*$ varies between $c - \sqrt{P \cdot \boldsymbol{b}^T \boldsymbol{b} }$ and $c + \sqrt{P \cdot \boldsymbol{b}^T \boldsymbol{b} }$. Therefore, we first investigate the extreme points $\lambda_1^* = c - \sqrt{P \cdot \boldsymbol{b}^T \boldsymbol{b} }$ and $\lambda_2^* = c + \sqrt{P \cdot \boldsymbol{b}^T \boldsymbol{b} }$. 

We first consider the case where $\lambda_1^* = c - \sqrt{P \cdot \boldsymbol{b}^T \boldsymbol{b} }$. Comparing this threshold to the optimal threshold from Equation \eqref{Eqn: optimal-lambda-K=1}, we have $\lambda_1^* = \boldsymbol{b}^T \boldsymbol{w} + c = c - \sqrt{P \cdot \boldsymbol{b}^T \boldsymbol{b} }$. On simplification, we find that $\boldsymbol{w}_1^* = - \sqrt{\frac{P}{\boldsymbol{b}^T \boldsymbol{b}}} \boldsymbol{b}$ is the optimal jammer's strategy. Thus, $\lambda_1^* = \boldsymbol{b}^T \boldsymbol{w} + c = c - \sqrt{P \cdot \boldsymbol{b}^T \boldsymbol{b} }$ and $\boldsymbol{w}_1^* = - \sqrt{\frac{P}{\boldsymbol{b}^T \boldsymbol{b}}} \boldsymbol{b}$ form a pure-strategy equilibrium. Similarly, it is easy to show that $\lambda_2^* = c + \sqrt{P \cdot \boldsymbol{b}^T \boldsymbol{b} }$ and $\boldsymbol{w}_2^* = \sqrt{\frac{P}{\boldsymbol{b}^T \boldsymbol{b}}} \boldsymbol{b}$ is another pure-strategy equilibrium.

Given these two pure-strategy equilibria, we find a parametric representation of all possible pure-strategy Nash equilibria, as given below. Let
\begin{equation}
	\boldsymbol{w}_{\boldsymbol{\epsilon}}^* = \sqrt{\frac{P}{\boldsymbol{b}^T \boldsymbol{b}}} \boldsymbol{\epsilon}
	\label{Eqn: optimal-w-parameterized-K=1}
\end{equation}
where $\boldsymbol{\epsilon}$ is the vector parameter that ranges from $-\boldsymbol{b}$ and $\boldsymbol{b}$. Note that the two solutions $\boldsymbol{w}_1^*$ and $\boldsymbol{w}_2^*$ both correspond to the parameter values $\boldsymbol{\epsilon}_1 = -\boldsymbol{b}$ and $\boldsymbol{\epsilon} = \boldsymbol{b}$ respectively. Furthermore, such a linear parameterization is valid because of the fact that $\boldsymbol{w}^*$ always lies on the line $\boldsymbol{b}^T \boldsymbol{w}^* = \lambda - c$, as given in Equation \eqref{Eqn: optimal-w-K=1}.

Substituting Equation \eqref{Eqn: optimal-w-parameterized-K=1} in Equation \eqref{Eqn: optimal-lambda-K=1}, we have 
\begin{equation}
	\lambda_{\boldsymbol{\epsilon}}^* = c + \sqrt{\frac{P}{\boldsymbol{b}^T \boldsymbol{b}}} \boldsymbol{b}^T \boldsymbol{\epsilon}.
\end{equation}

Since the equilibrium point satisfies the necessary conditions presented in Propositions \ref{Prop: Optimal-lambda-K=1} and \ref{Prop: Optimal-w-K=1}, the error probability at the FC is given by Equation \eqref{Eqn: Pe-FC-centralized-K=1}.
\end{proof}

\section{Discussion}

Since the network and the jammer are non-cooperative entities, we investigate the convergence of the players' strategies in a repeated game setting from any arbitrary strategy profile. We denote the initial pure strategy profile as $(\lambda_0, \boldsymbol{w}_0)$, where the total power of the initial jammer's strategy $\boldsymbol{w}_0$ is within the jammer's power budget $P$, and assume that all the players' strategies are \emph{perfectly observable}, i.e., the network makes noiseless observations regarding the jammer's strategy and vice-versa. 

\begin{lemma}
Given any initial strategy profile $(\lambda_0, \boldsymbol{w}_0)$, the players always converge to an equilibria presented in Theorem \ref{Thrm: NE-K=1} in a perfectly-observable repeated-game, irrespective of the order of their play.
\end{lemma}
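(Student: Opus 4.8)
The plan is to reduce the coupled best-response dynamics to a single scalar statistic $d \triangleq \lambda - \boldsymbol{b}^T \boldsymbol{w}$. Note that $P_E$ depends on the profile $(\lambda,\boldsymbol{w})$ only through $d$, since $P_E = \pi_0 Q(d/\sigma) + \pi_1[1 - Q((d-a)/\sigma)]$. Propositions \ref{Prop: Optimal-lambda-K=1} and \ref{Prop: Optimal-w-K=1} show that the FC's best response enforces $\lambda = \boldsymbol{b}^T\boldsymbol{w} + c$ and the jammer's best response enforces $\boldsymbol{b}^T\boldsymbol{w} = \lambda - c$, i.e., \emph{both} players act so as to drive $d$ toward the common value $c$. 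Consequently the equilibrium set of Theorem \ref{Thrm: NE-K=1} is exactly the feasible slice of $\{d = c\}$, and proving convergence reduces to showing the best-response map sends $d$ to $c$ in finitely many steps.

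Next I would analyze each player's move in terms of $y \triangleq \boldsymbol{b}^T\boldsymbol{w}$ and $Y_{\max} \triangleq \sqrt{P\,\boldsymbol{b}^T\boldsymbol{b}}$. Because $\lambda$ ranges over $\Lambda = [-R,R]$ with $R$ large while $y$ is bounded, the FC can always realize its unconstrained optimum $\lambda = y + c$; hence immediately after any FC move we have $d = c$ exactly. For the jammer, Cauchy--Schwarz gives $y \in [-Y_{\max}, Y_{\max}]$ with endpoints attained at $\boldsymbol{w} = \pm\sqrt{P/\boldsymbol{b}^T\boldsymbol{b}}\,\boldsymbol{b}$, while Lemma \ref{Lemma: Quasiconcave P_E(w) when K=1} makes $P_E$ quasiconcave in $y$ with its peak at $y = \lambda - c$. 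Thus the constrained maximizer is $y = \lambda - c$ whenever $\lambda$ lies in the window \eqref{Eqn: existence-optimal-w-K=1}, yielding $d = c$, and otherwise it is the nearer endpoint $\pm Y_{\max}$, i.e., the extreme strategy $\boldsymbol{w} = \pm\sqrt{P/\boldsymbol{b}^T\boldsymbol{b}}\,\boldsymbol{b}$ corresponding to $\boldsymbol{\epsilon} = \pm\boldsymbol{b}$ in Theorem \ref{Thrm: NE-K=1}.

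I would then assemble convergence by a brief case split on the order of play. If the FC moves first, its response to $\boldsymbol{w}_0$ gives $d = c$; the jammer's reply to $\lambda = y_0 + c$ seeks $y = y_0$, which is feasible because $\boldsymbol{w}_0$ already obeys $|y_0| \le Y_{\max}$, so $d = c$ is preserved and the profile is a Nash equilibrium of Theorem \ref{Thrm: NE-K=1} after a single step. If the jammer moves first and $\lambda_0$ lies in the window \eqref{Eqn: existence-optimal-w-K=1}, it reaches $d = c$ at once and the FC confirms it. If the jammer moves first with $\lambda_0$ outside the window, it jumps to an extreme point; the FC then sets $\lambda = c \pm Y_{\max}$, the boundary threshold at which the jammer's best response is unique and coincides with the current strategy, so the profile is the extreme equilibrium $\boldsymbol{\epsilon} = \pm\boldsymbol{b}$ after two steps. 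In every case the map thereafter holds $d = c$ invariant.

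The step I expect to be the main obstacle is the \emph{non-uniqueness} of the jammer's best response: when $\lambda$ is interior to the window, the hyperplane $\boldsymbol{b}^T\boldsymbol{w} = \lambda - c$ meets the power ball in an entire face, so ``convergence'' cannot be phrased as convergence of the vector $\boldsymbol{w}$ to a single point. I would resolve this by arguing at the level of the invariant $d$ (equivalently the payoff $P_E$): every fixed point of the best-response correspondence satisfies $\boldsymbol{b}^T\boldsymbol{w}^* = \lambda^* - c$ and therefore realizes the equilibrium value \eqref{Eqn: Pe-FC-equilibrium-K=1}, and the dynamics reach this set in at most two half-rounds regardless of which selection the jammer makes, so the conclusion is order-independent as claimed.
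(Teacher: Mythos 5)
Your proof is correct and takes essentially the same route as the paper: the identical case split on who moves first, with convergence in one half-step when the FC leads and at most two when the jammer leads, depending on whether $\lambda_0$ lies in the window \eqref{Eqn: existence-optimal-w-K=1}. Your scalar-invariant framing via $d = \lambda - \boldsymbol{b}^T\boldsymbol{w}$ is a cleaner packaging of that same argument, and it incidentally repairs two small blemishes in the paper's version --- the assertion that any $\boldsymbol{w}_0$ in the power ball ``can be represented in the same form as shown in Theorem \ref{Thrm: NE-K=1}'' (the component-wise box $-\boldsymbol{b}\le\boldsymbol{\epsilon}\le\boldsymbol{b}$ does not fill the ball, whereas your observation that the equilibrium set is the full feasible slice $\{d=c\}$ handles this correctly), and the unnormalized extreme strategy $\boldsymbol{w}_{1b}=\pm\boldsymbol{b}$, which should read $\pm\sqrt{P/\boldsymbol{b}^T\boldsymbol{b}}\,\boldsymbol{b}$ as you have it.
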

\begin{proof}
We prove this lemma in two cases. In the first case, we assume that the network takes the lead, followed by the jammer and so on. In the latter case, we assume the opposite where the jammer takes the lead, followed by the network and so on.

\paragraph*{CASE-1 [N-J-N-J-$\cdots$]} In this case, we assume that the network takes the lead. Therefore, given the initial strategy profile $(\lambda_0, \boldsymbol{w}_0)$, the network chooses its best response from Proposition \ref{Prop: Optimal-lambda-K=1}, which is
\begin{equation}
	\lambda_1 = \boldsymbol{b}^T \boldsymbol{w}_0 + c.
\end{equation}
Given that $||\boldsymbol{w}_0||_2^2 \leq P$, without any loss of generality, we can represent $\boldsymbol{w}_0$ in the same form as shown in Theorem \ref{Thrm: NE-K=1}. As a result, $\lambda_1$ also has the form presented in Theorem \ref{Thrm: NE-K=1}. Thus, the repeated game converges to an equilibrium point $(\lambda_1, \boldsymbol{w}_0)$ within one iteration.

\paragraph*{CASE-2 [J-N-J-N-$\cdots$]} In this case, we assume that the jammer takes the lead. Therefore, given the initial strategy profile $(\lambda_0, \boldsymbol{w}_0)$, the jammer chooses its best response as stated in Proposition \ref{Prop: Optimal-w-K=1}. In other words, if $\lambda_0$ lies between $c - \sqrt{P \cdot \boldsymbol{b}^T \boldsymbol{b}}$ and $c + \sqrt{P \cdot \boldsymbol{b}^T \boldsymbol{b}}$, the jammer chooses its best response $\boldsymbol{w}_{1a}$ such that
\begin{equation}
	\boldsymbol{b}^T \boldsymbol{w}_{1a} = \lambda_0 - c.
\end{equation}
Otherwise, the jammer employs a strategy $\boldsymbol{w}_{1b} = \pm \boldsymbol{b}$ where the sign of $\boldsymbol{w}_{1b}$ matches to $\sign(\lambda_0 - c)$. In such a case, the network adopts a best response strategy
\begin{equation}
	\lambda_1 = c \pm \sqrt{P \cdot \boldsymbol{b}^T \boldsymbol{b} }.
\end{equation}

In summary, if $\lambda_0$ lies between $c - \sqrt{P \cdot \boldsymbol{b}^T \boldsymbol{b}}$ and $c + \sqrt{P \cdot \boldsymbol{b}^T \boldsymbol{b}}$, the repeated game converges to an equilibrium point $(\lambda_0, \boldsymbol{w}_{1a})$ in one iteration. Else, the repeated game converges to an equilibrium point $(\lambda_1, \boldsymbol{w}_{1b})$.
\end{proof}

Since both the network and the jammer converge rationally to an equilibrium presented in Theorem \ref{Thrm: NE-K=1}, there is no incentive for the jammer to employ a pure strategy. This is because the error probability at the FC under such equilibrium solutions is totally independent of the jammer's strategy. In fact, the error probability at the FC in the presence of a jammer is identical to that in the absence of a jammer (i.e., $\boldsymbol{w} = \boldsymbol{0}$). 

Given that pure strategies are not beneficial to the jammer, we now investigate if there is any incentive to employ a mixed strategy at the jammer. For the sake of illustration, we consider an example similar to the model in \cite{Nadendla2011}, where the jammer employs a signal $\boldsymbol{w} \sim \mathcal{N}(\boldsymbol{0}, W)$ and admits an average power constraint\footnote{Note that the support of an average power constraint spans over $\mathbb{R}^{L+M}$, unlike the strict power constraint which has a compact support set $\mathcal{W}$.} $\tr(W) \leq P$. In the following lemma, we demonstrate that a simple Gaussian jammer with an average power constraint has a greater impact than that of a pure-strategy equilibrium stated in Theorem \ref{Thrm: NE-K=1}.

\begin{lemma}
	When the network employs its best response (mixed) strategy to the jammer's mixed strategy, the expected utility (average error probability) due to a Gaussian jammer with an average power constraint is always greater than the error probability under pure-strategy equilibrium.
\end{lemma}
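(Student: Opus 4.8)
The plan is to reduce the mixed-strategy scenario to an equivalent single detection problem with an \emph{inflated} noise variance, and then to show that the minimum Bayes error strictly increases with that variance. First I would compute the marginal statistics of $r_{fc}$ seen by the FC. Since $\boldsymbol{w} \sim \mathcal{N}(\boldsymbol{0}, W)$ is independent of the observation noise $z \sim \mathcal{N}(0,\sigma^2)$, the jamming contribution $\boldsymbol{b}^T\boldsymbol{w}$ is a zero-mean Gaussian with variance $\boldsymbol{b}^T W \boldsymbol{b}$. Marginalizing over the jammer's randomization therefore gives
\[
r_{fc}\mid H_0 \sim \mathcal{N}\!\left(0,\sigma_{\mathrm{eff}}^2\right), \qquad r_{fc}\mid H_1 \sim \mathcal{N}\!\left(a,\sigma_{\mathrm{eff}}^2\right), \qquad \sigma_{\mathrm{eff}}^2 = \sigma^2 + \boldsymbol{b}^T W \boldsymbol{b}.
\]
Because the FC observes $r_{fc}$ but not the realization of $\boldsymbol{w}$, its best response is the likelihood-ratio test for these two marginals, which is exactly the setting of Proposition~\ref{Prop: Optimal-lambda-K=1} with $\sigma$ replaced by $\sigma_{\mathrm{eff}}$ and a zero-mean jamming term. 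The resulting minimum expected error probability is
\[
\bar{P}_E = \pi_0 Q\!\left(\frac{c_{\mathrm{eff}}}{\sigma_{\mathrm{eff}}}\right) + \pi_1\left[1 - Q\!\left(\frac{c_{\mathrm{eff}}-a}{\sigma_{\mathrm{eff}}}\right)\right], \qquad c_{\mathrm{eff}} = \frac{1}{2a}\left[a^2 + 2\sigma_{\mathrm{eff}}^2 \log\frac{\pi_0}{\pi_1}\right],
\]
which has precisely the functional form of the equilibrium value in Theorem~\ref{Thrm: NE-K=1}, with $\sigma$ replaced by $\sigma_{\mathrm{eff}}$.

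The comparison then reduces to proving that the map $\sigma \mapsto P_E(\sigma)$ (of this common form) is strictly increasing. I would reparameterize in terms of the deflection $\eta = a/\sigma$ and set $k = \log(\pi_0/\pi_1)$, so that $c/\sigma = \tfrac{\eta}{2} + \tfrac{k}{\eta}$ and $(c-a)/\sigma = -\tfrac{\eta}{2} + \tfrac{k}{\eta}$; this exhibits $P_E$ as a function of $\sigma$ only through $\eta$. Writing $u = \tfrac{\eta}{2} + \tfrac{k}{\eta}$ and $v = -\tfrac{\eta}{2} + \tfrac{k}{\eta}$, the key simplification is the threshold-optimality identity $u^2 - v^2 = 2k$, equivalently $\pi_0\phi(u) = \pi_1\phi(v)$, which is nothing but the stationarity condition $\pi_1 f_2 = \pi_0$ of Proposition~\ref{Prop: Optimal-lambda-K=1}. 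Substituting this into $dP_E/d\eta$ causes the $k/\eta^2$ terms to cancel and collapses the derivative to $dP_E/d\eta = -\pi_0\phi(u) < 0$. Hence $P_E$ is strictly decreasing in $\eta$ and, since $\eta = a/\sigma$ is decreasing in $\sigma$, strictly increasing in $\sigma$. Finally, whenever the jammer places positive power along $\boldsymbol{b}$ we have $\boldsymbol{b}^T W \boldsymbol{b} > 0$, so $\sigma_{\mathrm{eff}} > \sigma$ and therefore $\bar{P}_E > P_E(\lambda^*,\boldsymbol{w}^*)$ strictly.

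The step I expect to be the main obstacle is twofold. The first delicate point is justifying that the FC's best response against the mixed strategy is genuinely the inflated-variance likelihood-ratio test; this relies on the information-structure assumption that the FC cannot observe the realized $\boldsymbol{w}$, only the marginal law of $r_{fc}$, so that no rule can beat the Bayes-optimal threshold for the two Gaussians above. The second is the monotonicity itself, where the clean cancellation hinges entirely on the identity $\pi_0\phi(u) = \pi_1\phi(v)$ holding at the optimal threshold. As a robustness check I would also note a purely qualitative argument: the jammed observation equals the unjammed one plus independent Gaussian noise, so it is a stochastically degraded experiment and, by Blackwell's comparison of experiments, its Bayes error cannot be smaller; however, upgrading this to the \emph{strict} inequality claimed in the lemma still requires the explicit derivative computation above.
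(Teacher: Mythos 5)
Your proposal is correct, and its first half coincides exactly with the paper's own proof: the paper likewise marginalizes the Gaussian jamming term to obtain the inflated-variance pair $\mathcal{N}(0,\sigma^2+\boldsymbol{b}^T W \boldsymbol{b})$, $\mathcal{N}(a,\sigma^2+\boldsymbol{b}^T W \boldsymbol{b})$ (its functional $\Gamma$), argues via quasiconvexity of $\tilde{P}_E(\lambda)$ that the network's best-response mixture degenerates to a Dirac delta at the inflated-variance Bayes threshold $\lambda^* = c + \frac{1}{a}\boldsymbol{b}^T W \boldsymbol{b}\,\log\frac{\pi_0}{\pi_1}$, and arrives at your $U(W)=\bar{P}_E$. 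Where you genuinely diverge is the comparison step. The paper simply asserts (in a footnote, by analogy with its Lemma 2) that $U(W)$ is quasiconvex in the covariance matrix $W$ with minimum at $W=\boldsymbol{0}$, and concludes only the non-strict bound $U(W)\geq P_E(\lambda^*,\boldsymbol{w}^*)$ --- note that this quasiconvexity in a matrix variable is never actually proved, and the claimed analogy is loose. You instead prove outright that the minimum Bayes error is strictly increasing in the effective noise level: the reparameterization $\eta = a/\sigma$, the stationarity identity $u^2-v^2=2k$ (equivalently $\pi_0\phi(u)=\pi_1\phi(v)$, with $\phi$ the standard normal density), and the resulting collapse $dP_E/d\eta = -\pi_0\phi(u)<0$ are all correct (I verified the cancellation of the $k/\eta^2$ terms), and this is essentially an envelope-theorem computation at the optimal threshold. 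Your route buys three things the paper's does not: a self-contained rigorous argument replacing the unproven quasiconvexity claim, the \emph{strict} inequality actually asserted in the lemma statement (which the paper's own proof delivers only as $\geq$), and the honest caveat that strictness requires $\boldsymbol{b}^T W \boldsymbol{b}>0$, i.e., the jammer must place positive power along $\boldsymbol{b}$ --- a condition the lemma glosses over. Your Blackwell degradation remark is also apt as a one-line proof of the weak inequality; you are right that it cannot by itself yield strictness. The one hypothesis you should state explicitly is $a>0$ (so that $\eta$ is decreasing in $\sigma$), which holds here since the paper assumes non-negative channel gains.
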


\begin{proof}
Let us define a functional
\begin{equation}
	\begin{array}{lcl}
		\Gamma(x) & = & \displaystyle \pi_0 Q \left( \frac{x}{\sqrt{\sigma^2 + \boldsymbol{b}^T W \boldsymbol{b}}} \right) 
		\\[3ex]
		&& \displaystyle \qquad + \pi_1 \left[ 1 - Q \left( \frac{x - a}{\sqrt{\sigma^2 + \boldsymbol{b}^T W \boldsymbol{b}}} \right) \right].
	\end{array}
\end{equation}

Given a fixed threshold $\lambda$ at the FC, the error probability at the FC turns out to be $\tilde{P}_E(\lambda) = \Gamma(\lambda)$. Note that $\tilde{P}_E(\lambda)$ is a quasiconvex\footnote{Proof is similar to our approach in Lemma \ref{Lemma: Quasiconvex P_E(lambda) when K=1}.} function of $\lambda$. In other words, if the network employs a mixed strategy, the optimal (best response) distribution is given by $p(\lambda) = \delta(\lambda^*)$, where $\lambda^* = \displaystyle c + \frac{1}{a} \boldsymbol{b}^T W \boldsymbol{b} \log \frac{\pi_0}{\pi_1}$ is the optimal threshold that minimizes $\tilde{P}_E(\lambda)$, and $\delta(x)$ is a Dirac delta function centered at $x$. Thus, the expected utility (minimum $\tilde{P}_E(\lambda)$) due to a Gaussian jammer is
\begin{equation}
	U(W) = \Gamma \left( \displaystyle c + \boldsymbol{b}^T W \boldsymbol{b} \frac{1}{a} \log \frac{\pi_0}{\pi_1} \right).
\end{equation}

Note that $U(W)$ is a quasiconvex\footnote{The proof is similar to our approach in Lemma \ref{Lemma: Quasiconcave P_E(w) when K=1}.} function of $W$, with its minimum at $W$ being an all-zero matrix. In other words,
\begin{equation}
	U(W) \geq P_E(\lambda^*, w^*),
\end{equation}
where $P_E(\lambda^*, w^*)$ is given in Equation \ref{Eqn: Pe-FC-equilibrium-K=1}. Consequently, the jammer has every incentive to use a mixed strategy rather than employing a deterministic (pure) strategy. 
\end{proof}

\section{Conclusion and Future Work}
We have modeled the interaction between a centralized detection network and a jammer as a zero-sum game, and found a family of pure strategy Nash equilibria in a closed-form. We have also shown that both the players converge to one of the equilibrium points proposed, in a perfectly-observable repeated game irrespective of the order of their play. Given that pure-strategy jamming attacks have no impact on network performance, we demonstrated that even a simple Gaussian jammer with average power constraints achieves a greater expected utility (average error probability due to mixed strategies) than in the case of pure-strategy equilibria. In the future, we will investigate mixed-strategy equilibria in our proposed framework under strict power constraints. Furthermore, we will consider practical incomplete-information games where both the network and the jammer has partial knowledge about the channel gains. We will also study the effects of receiver diversity at the FC, on the network performance in the presence of a jammer.


\bibliographystyle{IEEEtran}
\bibliography{IEEEabrv,references}

%
%

\end{document}